\documentclass[11pt]{article} 
\usepackage[T1]{fontenc}
\usepackage[utf8]{inputenc}
\usepackage{csquotes}
\usepackage{authblk}
\usepackage[authoryear]{natbib}

\usepackage[margin=1in]{geometry}
\usepackage{ifthen}
\usepackage{graphicx,color,xcolor}

\definecolor{cornellred}{rgb}{0.7, 0.11, 0.11}
\definecolor{dgreen}{rgb}{0.0, 0.5, 0.0}
\definecolor{ballblue}{rgb}{0.13, 0.67, 0.8}
\definecolor{royalblue(web)}{rgb}{0.25, 0.41, 0.88}
\definecolor{bleudefrance}{rgb}{0.19, 0.55, 0.91}
\definecolor{royalazure}{rgb}{0.0, 0.22, 0.66}
\usepackage{url}
\usepackage{hyperref}
\hypersetup{
    colorlinks = true,
    linkcolor=cornellred,
    citecolor=royalazure,
    linkbordercolor = {white}
}

\usepackage{fancybox}
\usepackage{array,float}

\usepackage{amstext,amssymb,amsmath}
\usepackage{amsthm} 
\usepackage{mathrsfs}
\usepackage{dsfont}
\usepackage{pstricks, pst-tree, pst-node}
\usepackage{enumitem}
\usepackage{mathtools}
\usepackage{accents}

\usepackage{romannum}
\AtBeginDocument{\pagenumbering{arabic}}
\usepackage{ifthen}

\usepackage[ruled]{algorithm2e} 

\usepackage{algpseudocode}
\usepackage[capitalize]{cleveref}

\theoremstyle{plain}
\newtheorem{theorem}{Theorem}[section]
\newtheorem{lemma}[theorem]{Lemma}
\newtheorem*{lemma*}{Lemma}

\newtheorem{corollary}[theorem]{Corollary}
\newtheorem{observation}[theorem]{Observation}

\theoremstyle{definition}
\newtheorem{definition}[theorem]{Definition}
\newtheorem{example}[theorem]{Example}
\newtheorem{remark}[theorem]{Remark}

\DeclareMathOperator{\argmax}{argmax}

\usepackage{epsfig}

\def\shortcite#1{\cite{#1}}

\title{Mechanism Design for Value Maximizers}
\author[$\dagger$]{Christopher A. Wilkens}
\author[$+$]{Ruggiero Cavallo}
\author[$\S$]{Rad Niazadeh}
\author[$\circ$]{Samuel Taggart}
\affil[$\dagger$]{Tremor Technologies,}
\affil[$+$]{Yahoo Research,}
\affil[$\S$]{Department of Computer Science, Stanford University,}
\affil[$\circ$]{Department of Computer Science, Oberlin College.}


\begin{document}
\maketitle

\begin{abstract}

In many settings, money is a tool of exchange with minimal inherent utility --- agents will spend it in a way that maximizes the value of goods received
subject to reasonable constraints, giving only second-order consideration to
the trade-off between value and price. While this perspective is commonly
captured in consumer choice theory, market equilibrium theory, and other
fields, it is markedly absent from the mechanism design literature --- agents
strategizing in a mechanism with money are almost always assumed to incorporate
money as an objective through quasilinear valuations.
\if We introduce a simple model of {\em value maximizers} that captures the
view that money is a constraint, and study mechanisms designed for such agents.
\fi
We study a simple model of {\em value maximizers} that captures online
advertisers and other agents who may view money solely as a constraint, and
study general questions of mechanism design for such agents.   
We show that the feasible and optimal points faced by a mechanism designer
change dramatically from the quasilinear realm and lay a foundation for a
broader study of value maximization in mechanism design.
Along the way, we offer new insight into the generalized second price (GSP)
auction commonly used in Internet advertising.
\if : for value maximizers, GSP is the truthful auction~\citep{aggarwal09}.
Moreover, this implies an axiomatization of GSP --- it is an auction whose
prices are truthful for value maximizers --- that can be applied much more
broadly than the simple model for which GSP was originally designed. In
particular, applying it to arbitrary single-parameter domains recovers the
folklore definition of GSP. \fi
Through the lens of value maximization, GSP metamorphosizes into a truthful
auction, sound in its principles and elegant in its simplicity.
\end{abstract}

\section{Introduction}  \label{sec:intro}

Quasilinear utilities epitomize {\em lex parsimoniae}: they are ubiquitous in
Economics because they are the simplest objective that trades value for money.
However, they presuppose that money is in fact a goal, while many practically significant categories of strategic agents do {\em not} consider money to be a first-order
objective. Rather, they want value, and money is simply treated as a spending
constraint.

Case in point: Amazon.com. Over the past 10 years, the e-commerce giant's profit margins have averaged a mere 2.21\%, much closer to brick-and-mortar stores like Best Buy (1.77\%) and Barnes \& Noble ($-0.18$\%) than e-commerce and Internet companies like eBay (17.18\%) and Google (24.16\%) (profit margin data from \url{ycharts.com}). Why? Amazon's worth is not maximized by making a quick buck in the moment, but by growing to become the 10,000lb gorilla of e-commerce. Reinvesting \$1 will pay out many times over in long-term value, so Amazon's reinvestment is simply limited by incoming revenue. The market has a heuristic for this --- strong revenue (i.e., value) indicates a company's potential for growth, so we can say maximizing Amazon's revenue (subject to profitability) is its goal, not profit (quasilinear utility).

Indeed, many fundamental models in microeconomic theory start by assuming money has no utility. It is standard in consumer choice theory to give each buyer a budget that will be entirely consumed. Similarly, the theory of general equilibrium introduces money as a lubricating tool that almost disappears in equilibrium --- agents sell their goods and buy other goods, consuming their entire income. In these settings, agents are modeled simply by their preference over goods with no benefit from leftover money.

Yet, in mechanism design it is unsatisfactory to assume money is always spent. Auctions and other mechanisms with money exploit quasilinear agents' price-sensitivity to ensure, e.g., that the winner is the bidder with the highest value; sacrificing price-sensitivity renders much of the mechanism design literature meaningless. This leaves us with a hole: what is the {\em lex parsimoniae model} for agents who, while sensitive to price, still see money as mainly a constraint? Our answer lies in a remarkably simple {\em value-maximizing} model:
\vskip -0.4in
\begin{proof}[Definition (Informal)]
\phantom{\qedhere}
A {\em simple value maximizer} strategizes to maximize the value $v_i$ he
obtains \if while maintaining budget balance, i.e., \fi while keeping payment $p_i\leq
v_i$; among outcomes with equal value, a lower price is strictly preferred.
\end{proof}

This definition captures the idea that an agent maximizes her perceived value with only a second-order consideration for saving money. 

An important application of the value maximizing model is when $v_i$ directly measures the maximum the agent is able to pay, e.g., due to budget or Return on Investment (ROI) constraints. 
As we will discuss, advertisers frequently manifest such behavior --- consider an online advertiser who wants to maintain a fixed profit margin. If this advertiser has an average net profit of \$4 from each consumer who clicks on its ad and wants to maintain a fixed profit margin of 50\%, the advertiser would be wiling to pay \$2 per click, i.e., $v_i=2$ for one click, $v_i=4$ for two clicks, and so on. 

Budgets also highlight an important extension to the simple value maximizer
model: agents may have a strict preference over outcomes even though
willingness to pay is the same. This is very naturally motivated when
willingness to pay is determined by {\em ability} to pay (i.e., budget), which
is frequently a fixed amount of cash on hand that provides a uniform binding
constraint regardless of what outcome results. This motivates our full
definition of a value maximizer:

\begin{proof}[Definition (Informal)]
\phantom{\qedhere}
A {\em value maximizer} has a preference over outcomes and a maximum willingness to pay for each outcome. \if (weakly consistent with its preferences). \fi It strategizes to achieve the most preferred outcome without incurring a price above what it is willing to pay.
\end{proof}
For this paper we restrict attention to a refinement of
the model in which the willingness-to-pay ordering does not clash with the
preference ordering.

This strictly generalizes the simple value maximizer. However, in the most
complex settings, the models nearly converge again --- when an agent's
willingness to pay is distinct for every outcome, the model again coincides
with a simple value maximizer. Thus, we will largely focus on simple value
maximizers.

\subsection{Preferences Over Extreme Bundles}

On face, agents who are value maximizers would be expected to exhibit extreme, arguably unnatural behavior.
Note that a value maximizer would choose an outcome with value \$1,000 over an
outcome with value \$999, even if the former cost \$999 and the latter were
free. However, this inherent lexicographic behavior should not precipitate wholesale rejection of the model.
When outcome
values are sufficiently differentiated, we will show that the preferences revealed by an agent
who interacts with an auction will be insufficient to differentiate between a
strict value maximizer (intuitively implausible) and an agent with complex
preferences that limit the cost of marginal gains in extreme cases (quite
plausible). Moreover, we will show that is not a hollow theoretical excuse for
downplaying extreme behavior --- in a standard auction model from Internet
advertising, wide classes
of agent preferences would be indistinguishable from value maximizing
preferences in an auction that is truthful for value maximizers. Thus, truthful
mechanisms for value maximizers are important even though the preference model,
in its simplicity, may be unrealistic in some extreme theoretical scenarios.

\subsection{Motivating settings}

Value maximizing behavior can be seen across a wide variety of settings; we
describe some here. 

\subsubsection{Internet advertisers}

\if Advertisers are quintessential examples of value maximizers, and internet
advertising will be a central example that we return to at various points in
the paper. \fi
Internet advertisers are quintessential value maximizers. They can generally be
categorized as either {\em brand} or {\em performance}, both of which are
well-captured by the {\em simple value maximizer} model, optimizing value given
different kinds of spending constraints.

\paragraph{Preferences of Brand Advertisers} Brand advertisers aim for
long-term growth and awareness.
They come to a marketplace with explicit objectives, and typically have a
mandate to meet a specific business goal---showing impressions to an audience,
generating clicks, or maximizing revenue---driven by long-term considerations
instead of immediate profit. Thus cost, while important, merely enters their
preferences as a constraint; brand advertisers will have a budget and limits on
what they are willing to pay on average for an impression/click/conversion, but
otherwise directly optimize for their mandate.

\paragraph{Preferences of Performance Advertisers} Performance advertisers, on
the other hand, optimize the immediate tradeoff between value---measured as
sales, sign-ups, or other so-called {\em conversions} generated directly from
their ads---and cost.  Return on investment (ROI) has been the standard metric
for measuring this tradeoff across all types of advertising for decades. ROI
measures the ratio of the profit obtained (``return'') to the cost or price
paid (``investment''), i.e., the density of profit in cost:
\[\texttt{ROI}=\frac{\texttt{Value-Price}}{\texttt{Price}}=\frac{\texttt{Profit}}{\texttt{Price}}\enspace\]
Being a density metric, unconstrained maximization of ROI is not
sensible, mainly because unconstrained maximization of ROI would likely cause an
advertiser to buy only the single cheapest impression or click available;
instead, advertisers come with an ROI constraint and optimize (e.g., maximize
revenue and/or clicks).

For example, the following story plays out regularly in Yahoo's ad
marketplaces:

\begin{enumerate}
\item Advertiser X designates a small budget for testing a Yahoo advertising
product.
\item Advertiser X measures ROI --- if X is happy with its ROI, it increases
its budget hoping to {\em maintain the same ROI}; if it is unhappy, it
withdraws.
\end{enumerate}

This behavior is also reflected in the standard industry tools: Google's
AdWords campaign management tool buys as much advertising as possible while
maintaining a target average cost-per-click (CPC) and budget (an average CPC constraint
corresponds to an ROI constraint, while a marginal CPC constraint would be
appropriate for profit maximization). Yahoo's display ad products also offer
options with ``CPC goals'' and ``CPA goals'' that buy impressions while aiming
to achieve a target average cost-per-click or cost-per-conversion, respectively.

Performance advertisers' ROI-centric behavior {\em does not} follow the
standard quasilinear intuition. One way to resolve this conflict is to conclude
that the intuition is wrong, perhaps since companies often care about revenue
and margins as much as profits. Another resolution is to conclude that ROI is a
deeply-ingrained heuristic for maximizing the effectiveness of a fixed
resource, either managing an advertising
budget~\citep{borgs07,kitts04,szymanski08,zhou08,auerbach08} among auctions or
between platforms (deciding how to bid between Yahoo and Google), or even
managing resources within the advertiser's organization (allocating a budget
between marketing and engineering). But regardless, performance advertisers
behavior is rarely profit-maximizing.

\subsubsection{Budgets as Targets}

For many agents, the budget represents not simply an upper bound but also a target. For them, budgets are the primary lever used to control their spending, in sharp contrast with auction theory, where it is typically assumed that the most significant strategic parameter is the bid. Indeed, in Internet advertising, many advertisers do not come with a bid but instead come with only a budget. Intermediary algorithms then manage the advertiser's bid to spend the budget fully. Such agents want the platform to maximize value subject to the specified budget, otherwise this tuning process is nonsensical.

More broadly, in large organizations, ``budget'' is typically a limited resource that must be carefully provisioned among the many groups that need it. As a result, each group only gets what it needs, and it is expected that budgets will bind. (This leads to a common and perverse phenomenon wherein groups that underspend get a smaller budget during the next budgeting cycle.) Again, such groups are incentivized to spend as much as they can within business constraints.

\subsubsection{Agency}

Any agent working on behalf of a principal will often be asked to and incentivized to maximize value. For example, advertising agencies are commonly paid by commission. Consequently, an agency's utility is maximized by spending as much of the client's money as possible without violating constraints set forth by the client.

A more mundane example is an employee ordering coffee for a meeting. The
meeting's attendees will focus on the quality of the coffee with little regard
for its cost, if they even know it. As a result, the employee ordering
coffee---whose performance will be evaluated first on the quality of the
coffee---will optimize by getting the best coffee available within an implicit
or explicit budget.

The general problem is due to the nature of agency --- enlisting an agent on one's behalf typically requires fixing a budget and other constraints. Once these constraints are set forth, the agent is primarily evaluated on maximizing value.

\subsubsection{Companies}

Almost paradoxically, companies commonly exhibit value-maximizing behavior
despite the fact that their {\em raison d'\^etre} is ostensibly to make profit.
One explanation is that while profit captures short-term performance, the
long-term prospects of a business are often divined from its revenue. A company
that maximizes revenue subject to maintaining profitability (revenue exceeds
cost) is a textbook example of a simple value maximizer.

While using revenue as a bellwether for long-term performance is a heuristic, it is plausibly grounded when long-term value from investment far exceeds immediate payoffs. Quasilinear utilities
assume free and unlimited borrowing, which is unrealistic in practice. Rather,
companies are generally expected to remain instantaneously profitable, and so their present investment in long-term value is capped by their immediate profits.
For example, the long-term value of a new customer can easily be an order of
magnitude higher than the profit attributed to said customer in its first
quarter or year. As long as a company's long term value exceeds what it is able
to pay (either constrained by revenue or by practical limits on borrowing
against future profits), its optimal strategy is to bring in as many customers
as possible.

Amazon exemplifies how a focus on revenue optimizes for long-term value as
noted earlier. Yahoo's recent history (prior to being acquired) is another
example: investors' headline goal and demand was revenue growth as an
indicator that the company had long-term value.

\subsubsection{Consumers}

For the consumer, value-maximizing behavior can be understood as comparing the price of an item to an intrinsic value benchmark rather than considering the utility of leftover money. For example, one author commonly employs this strategy to buy strawberries. He buys up to 4 pounds of strawberries each week depending on price according to a value maximizing strategy --- roughly, if he can get 3 pounds for \$8 or less, he buys 3 pounds; if 3 pounds is too expensive, he will buy 2 pounds if the price is at most \$6; finally, if he can't buy more, he buys 1 pound as long as the price is at most \$5. In particular, the author's heuristic behavior is based on (learned) price thresholds for each quantity; it is {\em not} based on the difference between value and price, as a quasilinear model would require.

Neither is value maximizing behavior restricted to little purchases. Consider housing purchases. A house buyer typically starts with limits on what he is willing or able to pay, then finds the best house that satisfies these constraints. For example, a buyer might have a budget of \$400,000 for the perfect house, but a limit of \$300,000 if the house only has 3 bedrooms. Subject to these constraints, the buyer will find the house on the market with the highest value.


\subsection{Our Results}


We address fundamental mechanism design questions for value maximizers. First,
what are truthful mechanisms? For the single-parameter setting, we provide a
characterization (Section \ref{SEC:SINGLEPARAM}).  We then use this
characterization to derive the revenue-optimal mechanism for the simple setting
of selling $k$ identical items to a single buyer. We show that the
revenue-optimal ex-post truthful mechanism heavily exploits second-degree price
discrimination by limiting the number of goods sold, in sharp contrast to
Myerson's optimal mechanism for quasilinear buyers that only requires a reserve
price and therefore sells all $k$ items or nothing.
%
In Sections \ref{sec:gsp} and \ref{sec:robust}, we then use our characterization of truthful mechanisms to explain the centrality of the GSP mechanism
for advertising auctions. We demonstrate that GSP is the truthful mechanism for ad
auctions with value maximizers.

In Section \ref{sec:value-max}, we study mechanisms when values over outcomes are fully general, showing that any mechanism from a family of transformed greedy
mechanisms is always truthful even for unrestricted valuation structures. This
family is analogous to the Vickrey-Clarke-Groves (VCG) auction and affine
maximizers in quasilinear settings. Interestingly, while affine maximizers in
quasilinear settings are limited to scaling bidders' values, our greedy
mechanisms are truthful for an arbitrary monotone transformation of bidders'
values. Unfortunately but unsurprisingly, while these mechanisms are powerful
because of their generality, since they are greedy we show that they can only
guarantee a $\Theta(n)$-approximation to the optimal sum of agents' values.


\subsection{Related Work}

The idea that agents maximize value is common. For example, the traditional studies of consumer choice and market equilibria begin with agents who pick the most-preferred bundle they can afford based on their endowment~\citep{mas-collel95}.

In the context of mechanism design, a limited literature studies non-quasilinear bidders. 
Concurrent work by \shortcite{fadaei16} introduces a similar model and discusses a plethora of similar and interesting motivating examples, then studies revenue maximization. Our model differs notably because we say bidders prefer lower prices when value is held constant, while Fadaei and Bichler imply bidders strictly prefer to spend their money. This difference fundamentally changes the mechanism design problem, since without a weak preference for retaining money, Fadaei and Bichler would, e.g., prescribe a first price auction when selling a single indivisible item while our model would require a second price auction. 

In the context of advertising auctions, \shortcite{aggarwal09} design sponsored search (slot) auctions for advertisers whose preferences add constraints on top of a quasilinear utility. While their general models are technically incomparable to ours, they cover our value maximizers in the special case of sponsored search. \shortcite{alaei11} show how Walrasian equilibria in non-quasilinear unit-demand settings can be leveraged to build ad auctions where estimation errors break quasilinearity.

A few papers study general truthful mechanisms for non-quasilinear preferences~\citep{adachi13,morimoto15}. These papers develop axiomatic characterizations of a VCG-analog for multi-unit and unit-demand settings.

\section{Model and Preliminaries}  \label{sec:prelim}

A mechanism $M$ picks an outcome $o\in\mathcal O$ and prices $p_i\in\Re$ for
each agent $i\in[n]$. Each agent has a total preference order $\preceq_i^o$
over $\mathcal O$ and a willingness-to-pay $v_i:\mathcal O\rightarrow\Re$ for
each outcome. 
%

%
Together, $\preceq_i^o$ and $v_i$ define a total preference order
$\preceq_i^{(o,p)}$ over outcome-payment pairs $(o, p)$ as follows:
\begin{definition}
A {\em value maximizer} prefers bundles $(o,p)$ according to $\preceq_i^o$ as long as
$p\leq v_i(o)$, with ties broken in favor of a lower price. When $p>v_i(o)$ a
value maximizer prefers a lower price, with ties broken in favor of
$\preceq_i^o$.
\end{definition}

We will largely focus on simple value maximizers, whose preferences are
entirely defined by $v_i$, i.e., with $o_1\succeq_i^oo_2$ if and only if
$v_i(o_1)\geq v_i(o_2)$.


\begin{definition}
A {\em simple value maximizer} prefers bundles $(o,p)$ with higher
$v_i(o)$ as long as $p\leq v_i(o)$, with ties broken in favor of a lower
price. When $p>v_i(o)$ a lower price is preferred, with ties broken in
favor of a higher value. Formally, for any two outcome-payment pairs $(o_1,p_1)$ and $(o_2,p_2)$,
$(o_1,p_1)\succ_i^{(o,p)}(o_2,p_2)$ if and only if any of the following
hold:
\begin{itemize}
\item $v_i(o_1)\geq p_1 \;\wedge\; v_i(o_1)>v_i(o_2)$
\item $v_i(o_1)\geq p_1 \;\wedge\; p_1< p_2 \;\wedge\; v_i(o_1)=v_i(o_2)$
\item $v_i(o_2)<p_2 \;\wedge\; p_1< p_2$
\item $v_i(o_2)<p_2 \;\wedge\; p_1=p_2 \;\wedge\; v_i(o_1)>v_i(o_2)$
\end{itemize}
\end{definition}

For comparison, a quasilinear bidder always prefers the bundle that maximizes the difference between value and price.

\begin{definition}
A {\em budgeted value maximizer} has a preference ordering $o_1\succeq_i^oo_2$ over outcomes but the same willingness to pay $v_i(o)=B$ for all of them.
\end{definition}

\begin{definition}
A bidder with {\em quasilinear} preferences prefers bundles with higher $v_i(o)-p$, i.e.,
$v_i(o_1)-p_1\geq v_i(o_2)-p_2 \Leftrightarrow (o_1,p_1)\succeq_i^{(o,p)}(o_2,p_2)$.
\end{definition}

\paragraph{Mechanisms and Truthfulness.} A direct revelation mechanism $M$ takes reports of bidders' preferences as input. We will work with simple value maximizers, whose preferences are entirely encoded in $v_i$. 
Given reported values, also known as \emph{bids} and denoted by $\mathbf{b}$, a mechanism chooses an outcome $f(\mathbf{b})$ and payments $p_i(\mathbf{b})$. We will often suppress $f(\cdot)$ when it is clear from context, e.g., writing $v_i(\mathbf{b})$ instead of $v_i(f(\mathbf{b}))$.
A mechanism is {\em truthful} if reporting a bidder's true private information (e.g., bidding $b_i=v_i$ for simple value maximizers) is a dominant strategy:

\begin{definition}
A mechanism $\mathcal M=(x,\mathbf{p})$ is {\em dominant strategy incentive compatible} (DSIC) if, for any $i$ and bids $b_{-i}$, we have 
\[ (f(b_{-i},v_i),p_i(b_{-i},v_i)) \succeq_i (f(b_{-i},b_i),p_i(b_{-i},b_i)) \]
for all value functions $v_i$ and deviations $b_i$.
\end{definition}

In this paper, 
we will have trouble in continuous bid spaces when truthful bidding would induce a tie; to solve this we use a slight weakening of DSIC that allows arbitrary behavior on a set of bids that ``never occur'':

\begin{definition}
\label{def:dsic-ae}
A mechanism $\mathcal M=(x,\mathbf{p})$ is {\em dominant strategy incentive compatible almost everywhere} (DSIC-AE) if, for any $i$ and bids $b_{-i}$, we have 
\[ (f(b_{-i},v_i),p_i(b_{-i},v_i)) \succeq_i (f(b_{-i},b_i),p_i(b_{-i},b_i)) \]
for all value functions $v_i$ and deviations $b_i$, except a set of $v_i$ with measure zero according to the Lebesgue measure over $v_i$ (unless specified otherwise).
\end{definition}
We loosely use the term ``truthful'' to describe either DSIC or DSIC-AE.

\begin{remark}
Note that DSIC is the same as DSIC-AE for a finite bid space. Our DSIC-AE mechanisms will be DSIC when applied to a well-ordered bid space.\end{remark}


\section{Characterizing Mechanisms for Single-Parameter Domains}
\label{SEC:SINGLEPARAM}

To begin, we study truthful mechanisms whose private information can be represented by a single parameter, specifically where the valuation function can be factored as $v_i(o)=v_i\cdot x_i(o)$. In quasilinear settings it is known that any monotone allocation rule can be made incentive compatible~\citep{myerson81, archer01} when $x_i(o)$ is public information and $v_i$ is private. Two natural single-parameter models come to mind when studying value maximizers: {\em simple value maximizers} instantiate our canonical model, preferring higher-value outcomes as long as the price does not exceed the value; and {\em budgeted value maximizers} who prefer higher-valued outcomes as long as the payment does not exceed the budget. In both cases, we will see that truthful mechanisms have very simple prices: the price is the minimum value required to maintain the same allocation $x$.

\vspace{1mm}
\paragraph{Simple Value Maximizers}


\begin{theorem}
\label{THM:SPDSIC}
For simple value maximizers in a single parameter domain, a mechanism is truthful (DSIC-AE) if and only if the allocation is monotone and the price is the minimum value required to get the same allocation:
\begin{itemize}
\item {\em (monotonicity)} $[x_i(z, v_{-i})> x_i(v)]\Leftrightarrow[z>v_i]$ almost everywhere
\item {\em (pricing)} $p_i(v)=\inf_{z|x_i(z, v_{-i})=x_i(v)}zx_i(v)$
\end{itemize}
\end{theorem}
\begin{corollary}
\label{COR:CRITICALPAY}
For any monotone $x_i$, the mechanism that charges $p_i(v)=\inf_{z|x_i(z, v_{-i})=x_i(v)}zx_i(v)$ is DSIC-AE.
\end{corollary}
This corollary follows because the $\inf$ only differs from the $\min$ where $x$ is discontinuous, and a monotone function is continuous almost everywhere (discontinuities are countable and therefore have measure zero).

\proof{Proof of Theorem~\ref{THM:SPDSIC}.}

Fix other bidders' values $v_{-i}$ and drop them for clarity.

{\em Necessity.} Fix a type $v_i$ and define
\[\mathcal{Z}^<=\{z|x_i(z)<x_i(v_i)\}\quad\mbox{and}\quad \mathcal{Z}^==\{z|x_i(z)=x_i(v_i)\}\]
to be the types that get a strictly smaller allocation than $v_i$ and an equal allocation respectively. A value maximizer with single parameter type $v_i$ will choose a bid $b_i$ that maximizes $x_i(b_i)$ subject to $p_i(b_i)\leq v_i$. Thus, for any $v_i$, $p_i(v_i)$ must be high enough that bidders who get a smaller allocation do not want to lie, i.e.
\[p_i(v_i)\geq \underline p_i\triangleq\sup \mathcal{Z}^<\enspace.\]
Similarly, all types who exactly get $x_i(v_i)$ must pay the same price, and it must be that any bidder who gets $x_i(v_i)$ must be willing to pay for it, so
\[p_i(v_i)\leq\overline p_i=\inf \mathcal{Z}^=\enspace.\]
Since $\sup\mathcal{Z}^<\geq\inf\mathcal{Z}^=$, it must be that $\underline p_i\geq\overline p_i$, and therefore to satisfy $\underline p_i\leq p_i(v_i)\leq \overline p_i$ a mechanism will be truthful if and only if $\underline p_i=\overline p_i=p_i(v_i)$ --- this happens if and only if $x$ is nondecreasing and is the price defined in the theorem.

{\em Sufficiency.} Note that for DSIC-AE we only need to prove sufficiency almost everywhere. We must consider three deviations: (1) a bidder underbids for a smaller allocation, (2) a bidder deviates for the same allocation at a lower price, and (3) a bidder overbids for a larger allocation. Note that deviation (1) will not happen because a value maximizer will always want the larger allocation (since $p_i(v_i)\leq v_i$ by definition of $p_i$), and that  (2) cannot happen because all types who get the same allocation pay the same price.

It remains to show that bidders almost never have an incentive to raise a bid (3). Deviation (3) will only be beneficial if there exists a type $z>v_i$ such that $x_i(z)>x_i(v_i)$ but $p_i(z)<v_i$. By definition of $p_i$, there exists a type $\underline b_i(z)\geq v_i$ such that
\[p_i(z)=\underline b_i(z)\enspace.\]
As long as $\underline b_i(z)>v_i$ this deviation cannot be beneficial. Unfortunately, when $b_i(z)=v_i$ then deviation (3) will be beneficial; however, this can only happen at a discontinuity in $x$. Since $x$ is monotone, discontinuities can only happen rarely (the set of discontinuities is countable and therefore has measure 0) and we get truthfulness in our almost-everywhere sense (DSIC-AE).\qed
\endproof

\vspace{1mm}
{\em Budgeted Value Maximizers}

Budgets are another natural setting where value maximizers have a single piece of private information to reveal. It is often the case that an agent's preference ordering $\preceq_i^o$ over outcomes is public knowledge (say, advertisers obviously prefer to be ranked higher on the page) and an agent has a fixed budget irrespective of the outcome. In this case, the only information that a bidder must reveal is the maximum amount it is willing to pay, $v_i$. We find that a nearly identical result holds in this setting:

\begin{theorem}\label{thm:vm-budget-char}
For value maximizers with a single budget $v_i=B_i$, a mechanism on a countable outcome space is truthful (DSIC-AE) if and only if it is monotone and the price is the minimum budget required to get the same allocation:
\begin{itemize}
\item {\em (monotonicity)} for any $v_{-i}$, $f(z, v_{-i})\succ_i^o f(v)$ if and only if $z>v_i$ almost everywhere over $v_i$
\item {\em (pricing)} $p_i(v)=\inf\{z|f(z, v_{-i})=f(v)\}$
\end{itemize}
\end{theorem}
We also observe an analogous corollary:
\begin{corollary}
For any monotone $x_i$, the mechanism that charges $p_i(v)=\inf\{z|f(z, v_{-i})=f(v)\}$ is DSIC-AE.
\end{corollary}
The proof of Theorem~\ref{thm:vm-budget-char} is nearly identical to the proof of Theorem~\ref{THM:SPDSIC} and is omitted.

\section{Maximizing Revenue from a Single Agent}\label{sec:sp-revenue}

In this section, we use the the characterization just derived for a
single-dimensional simple value maximizer to study mechanisms that maximize
revenue. We suppose the type is a {\em value} drawn from a distribution $F$ and
maximize expected revenue among deterministic, DSIC mechanisms for selling $k$
identical items. Any such mechanism maps the agent's value $v$ to a
deterministic allocation $x\in \{0,1,\ldots,k\}$. The agent gets value for
allocation $v\cdot x(v)$ (i.e., its value is additive across identical items),
and is charged a deterministic payment $p(v)$. (Note that the same problem for
randomized mechanisms is ill-posed without a model for how value maximizers
evaluate lotteries over allocations.)


We start with an example where revenue-optimality entails {\em second-degree
price discrimination}, i.e., the agent is offered the option to buy any number
of items from $0$ to $k$, and a positive measure of values will take each of
the $k+1$ different offers. This stands in contrast to the optimal mechanism
for a quasilinear agent, in which the menu consists of $0$ items and $k$ items,
with nothing in between. We then generalize and provide a revenue-optimal
single-agent mechanism that can be computed in polynomial time via dynamic
programming, and show that across all possible inputs, the price discrimination
it applies is the necessary rule rather than the exception.

\subsection{Example: Two Items, Uniform Agent}
\label{sec:discexample}

Consider selling a supply of two items to an agent with value $v$ drawn
uniformly from $[0,1]$. If the agent was quasilinear, the results of
\cite{myerson81} imply that the optimal mechanism would be to sell the bundle
of both items at price $1$ if $v\geq 1/2$, and to sell nothing otherwise. We
will show that if the agent is a simple value maximizer, the optimal mechanism
will sell a single item to a positive measure of values, in addition to selling
both items and no items at all with positive probability.

To derive the optimal mechanism, we use the characterization in
Section~\ref{SEC:SINGLEPARAM} for single-parameter agents. Namely, we optimize over all
deterministic, monotone allocation rules, which are given by two values: the
value $t_1$ where the allocation steps up from $0$ to $1$, and the value $t_2$
where the allocation steps up from $1$ to $2$. Individual rationality implies
that the truthful payments for an agent with value $v<t_1$ is 0.
Corollary~\ref{COR:CRITICALPAY} implies that the truthful payment for an agent
with value $v\in [t_1,t_2)$ is $t_1$, and for an agent with value $v\geq t_2$
is $2\cdot t_2$. The revenue of our mechanism is therefore $t_1(t_2-t_1)+2
t_2(1-t_2)$, where the first term is revenue from the event that the agent
receives a single item, and the second term from the event where the agent
receives two items. All that remains is to optimize this function over $0\leq
t_1\leq t_2\leq 1$, which yields an optimal solution of $t_1= 2/7$, and $t_2 =
4/7$. Hence, the optimal mechanism allocates exactly one item with probability
$2/7$. This is second-degree price discrimination.


\subsection{Optimal Single-Agent Mechanisms}

We now give a polynomial-time algorithm to compute the optimal mechanism for
selling $k$ items to a simple value maximizing agent. Note that for a
deterministic mechanism, Theorem~\ref{THM:SPDSIC} implies that we need only
choose the thresholds $t_j$ at which the allocation level increases from $j-1$
to $j$ for all $j$ from $1$ to $k$. For simplicity of exposition, we assume the
value distribution $F$ is finitely supported on a set $v_1,\ldots,v_n$, with
probability mass function given by $f(\cdot)$. Label this set such that $v_1 <
v_2 < \ldots < v_n$. Extending our algorithm to continuous distributions is
straightforward for most standard models for distributional access.

Corollary~\ref{COR:CRITICALPAY} implies that the expected revenue of a
mechanism with thresholds $t_1\leq t_2\leq\ldots\leq t_k$ is given by:

\begin{equation}
\label{eq:optobjective}
\sum_{\ell=1}^k \ell\cdot t_\ell(F(t_{\ell+1})-F(t_\ell)),
\end{equation}
where $F(v_i)$ is given by $\sum_{j=1}^{i-1} f(v_i)=\text{Pr}[v<v_i]$, and
where $F(v_{n+1})$ is defined to be $1$.

Define $\text{OPT}(i, j)$ to be the maximum attainable revenue from only
selling to values at least $v_j$, and only selling $i$ or more items in the
event of a sale. Then the value of (\ref{eq:optobjective}) with the optimal
thresholds is equal to $\text{OPT}(1,1)$. Moreover, we have the following
recurrence relation:

\begin{equation}
\label{eq:recurrence}
\text{OPT}(i,j)=\max_{\ell\geq j}\left[iv_j(F(v_\ell)-F(v_j))+OPT(i+1,\ell)\right]
\end{equation}

This yields a straightforward two-dimensional dynamic program. For $i=k$, we
have $\text{OPT}(i,j)=kv_j(1-F(v_j))$ for all $j\leq n$. For $j=n+1$, we have
$OPT(i,j)=0$ for all $i$. The remaining values of $OPT(i,j)$ can be computed by
filling a memoization table starting with high values of $i$ and $j$ and moving
to low values.

\subsection{The Ubiquity of Price Discrimination}

We now show that in a sense, the second-degree price discrimination observed
above is unavoidable. In particular, it will occur for \emph{every} continuous,
fully-supported value distribution. Formally:

\begin{theorem}
\label{THM:REV}
For every continuous, fully-supported value distribution $F$, and for every number of items $i$ between $0$ and $k$, there is a positive measure of values who will purchase exactly $i$ items under the revenue-optimal mechanism to sell to a simple value maximizer with value distribution $F$.
\end{theorem}

\begin{proof}{Proof of Theorem~\ref{THM:REV}}
Theorem~\ref{THM:SPDSIC} implies that every deterministic truthful mechanism is characterized completely by its choice of thresholds $t_j$ at which the allocation level increases from $j-1$ to $j$ for all $j$ from $1$ to $k$. Moreover, we have that the revenue of such a mechanism is given by 
\begin{equation}
\label{eq:optobjectiveagain}
\sum_{\ell=1}^k \ell\cdot t_\ell(F(t_{\ell+1})-F(t_\ell)).
\end{equation}

We consider a mechanism which does not sell $i$ items with positive probability, and show how to improve the mechanism's revenue --- hence, such a mechanism cannot be optimal. A similar argument implies that no mechanism which sells with probability $1$ is optimal.

A mechanism which for some $i\in\{1,\ldots,k\}$ does not sell exactly $i$ items with positive probability must have $t_i=t_{i+1}$. Choose the lowest $i$ for which this is the case. Then $t_i-t_{i-1}>0$. We will show that such a mechanism's revenue can be improved by slightly decreasing $t_i$ to $t_i'=t_i-\epsilon$ for some sufficiently small $\epsilon>0$. This decreases the measure of values purchasing exactly $i-1$ items by $F(t_i)-F(t_i-\epsilon)$, which decreases the revenue contribution from this event by $(i-1)t_{i-1}(F(t_i)-F(t_i-\epsilon))$. Meanwhile, decreasing $t_i$ causes a $F_i(t_i)-F(t_i-\epsilon)$ measure of values to purchase exactly $i$ items instead, yielding a revenue increase of $i(t_i-\epsilon)(F_i(t_i)-F(t_i-\epsilon))$. The gain from decreasing $t_i$ outweighs the loss as long as $(i-1)t_{i-1}<i(t_i-\epsilon)$, which holds as long as $\epsilon$ is sufficiently small.
\qed
\end{proof}

\section{A Generalizable Definition of GSP}  \label{sec:gsp}

A significant byproduct of our existential theory from Section~\ref{SEC:SINGLEPARAM} is that we gain a new perspective on the generalized second price (GSP) auction commonly used in Internet advertising.

When Google launched its pay-per-click AdWords auction in February 2002, it
aimed to alleviate stability issues in contemporary first price models by
incorporating ideas from a second price auction. Google quickly realized that
its implementation did not generalize a second price auction according to the
standard theory---the proper generalization would be the Vickrey-Clarke-Groves
(VCG) mechanism---but the GSP auction they invented was performing well enough
that the potential gains of switching to VCG were not worth the risk and
disruption~\citep{varian14}. (Overture.com, formerly GoTo.com and later bought
by Yahoo, introduced the first sponsored search auctions in 1998 and was still
the dominant player in 2002. Their auction had pay-your-bid pricing that has
since been dubbed the generalized first price auction, or GFP.)


While GSP has thrived in the industry, the research community has struggled to rationalize it. Initial analyses showed that revenue and welfare in reasonable equilibria are no worse than VCG~\citep{varian07,eos07}. Later studies offered more nuanced analyses --- a more recent justification falls out of GSP's simplicity: 
GSP's equilibrium guarantees are immune to certain errors in click modeling~\citep{milgrom10,dutting16}. Unfortunately, none of these results would guide us to invent GSP if it weren't already implemented in practice.


Our work suggests a a simpler and more powerful rationalization for GSP that we will explore in this section: {\em GSP is truthful for value maximizers}.

\subsection{ROI, Value Maximizers, and GSP}

We previously argued that advertisers' objectives are well-modeled by value maximization subject to a ROI constraint. We also hinted that this was equivalent to our simple value maximizer model, which we now formalize: 
\begin{lemma}\label{lem:svm-roi}
If a deterministic auction is truthful for simple value maximizers, it will be a dominant strategy for a value maximizer with value $v_i$ and ROI constraint $\gamma$ to report what it is willing to pay, $v_i'=w_i=\frac{v_i}{1+\gamma}$.
\end{lemma}
\begin{proof}{Proof of Lemma~\ref{lem:svm-roi}.}
Both bidders have the same preferences over bundles that are preferred to non-participation, so a dominant strategy for a simple value maximizer with value $w_i$ will also be a dominant strategy for a value maximizer with value $v_i$ and ROI constraint $\gamma$.
\qed
\end{proof}
Note that we no-longer expect bidders to report $v_i$ and instead expect they will report what they are willing to pay, $w_i$. If we desired a true direct revelation mechanism we could ask bidders to report $\gamma$ as well, but in practice it is more natural to report what one is willing to pay than what something is truly worth.

With this lemma in hand, it remains to show that GSP is truthful for simple value maximizers. Surprisingly, this is nearly trivial, and was observed by \cite{aggarwal09} in the context of a general preference model. The proof via \cite{aggarwal09} observes that value maximizers with ROI constraints can be mapped to their maximum price bidder model. We sketch a proof from first principles for intuition, and generalize it to value maximizers with an ROI constraint. A rigorous generalization of this proof can be found in Theorem~\ref{THM:SPDSIC}.
\begin{theorem}[Application of \cite{aggarwal09}]
GSP is truthful (effectively DSIC-AE, see Definition~\ref{def:dsic-ae}) for value maximizers with ROI constraints, that is, it is a dominant strategy for a bidder to report the maximum price she is willing to pay. 
\end{theorem}
\begin{proof}{Sketch of the proof.} Note that the GSP price $p_i$ is the minimum bid $i$ could have submitted while maintaining the same rank in the auction.

First, assume that bidders are simple value maximizers. Fix a bidder $i$ and bids $b_{-i}$. By the taxation principle, we can think about bidder $i$ choosing among the slots at prices $p_{i,j}$. Let $V_{i,j}\subseteq\Re$ denote the set of value reports for which $i$ wins slot $j$ under a particular auction.

For a pricing to be truthful for a value maximizer, it should be that $i$ wins a particular slot $j$ if and only if (a) it is willing to pay $p_{i,j}$ for slot $j$ and (b) is not willing to pay $p_{i,j-1}$ for the slot immediately above it. This implies that (a) $p_j\leq\inf V_{i,j}$ and (b) $p_{j-1}\geq\sup V_{i,j-1}$ and thus
\[\sup V_{i,j+1}\leq p_{i,j}\leq\inf V_{i,j}\]
that is, $p_{i,j}$ is the threshold value (bid) at which $i$ moves from slot $j+1$ to slot $j$. This is precisely the GSP price. Finally, with the help of Lemma~\ref{lem:svm-roi}, we can extend above to value maximizers with an ROI constraint.
\qed
\end{proof}

\subsection{Generalizing GSP}

A significant consequence of this new view of GSP is that we discover a principled way to generalize it.  We first remark that GSP is often described by the following folklore definition (this definition completely specifies GSP pricing in the standard position auction model): 
\begin{proof}[Definition (Folklore)]
\phantom{\qedhere}
GSP is the auction that maximizes expected bidder value and charges the minimum bid required to keep the same allocation.
\end{proof}
Until now, this definition was merely a pleasant-sounding heuristic for general auction settings; however, our work gives this definition theoretical teeth:
\begin{observation}
The folklore definition of GSP is equivalent to the truthful auction for value maximizers defined in Theorem~\ref{THM:SPDSIC}.
\end{observation}

This inspires the following generalized definition of GSP:
\begin{proof}[Definition (Proposed)]
\phantom{\qedhere}
\label{prop:ggsp}
A generalized second price auction (GSP) maximizes expected bidder value and charges prices that make the auction truthful for simple value maximizers.
\end{proof}
This definition is important because it allows us to extend GSP to new contexts in a principled way. Doing so is especially important problem because ads are becoming progressively more complex, and the algorithms required to optimize their placement and features are following suit. A small existing literature has struggled to generalize GSP in the context of quasilinear bidders, generally finding equilibria may not exist and that even when they do, they may not be efficient~\citep{deng10,cavallo14,bachrach14}. Our characterization suggests that these negative results may be because GSP's strength was misunderstood.


\section{Robustness Value Maximizing Model in Internet Advertising}  \label{sec:robust}

We now show that our minimal theoretical model is robust in a practical sense. Value maximization is intuitively an extreme behavior, so it would be unsurprising if small modeling changes dramatically changed our results. However, we find that the opposite happens in the presence of ROI constraints: a sufficient ROI constraint generally makes nearly any advertiser look like a value maximizer. An example is instructive:

\begin{example}
Suppose that a bidder has an ROI constraint of 1 and has choices between an outcome $o$ with value $v_ix_i(o)=1$ and an outcome $o'$ with value $v_ix_i(o')=10$. Assuming $p\geq0$, a profit-maximizing bidder will prefer $o'$ to $o$ at any price $p'\leq\$9$; however, since bidder $i$ has an ROI constraint of 1, she will only consider outcome $o'$ when $p'\leq\$5$. As a result, any time $o'$ is cheap enough for her to consider it, she will always prefer it to outcome $o$ --- in this example, {\em bidder $i$ is effectively a simple value maximizer with value $v_i'=\frac12v_i$.}
\end{example}

More generally, when outcomes in an auction lead to dramatically different allocations, a bidder must face a very high price before a lesser outcome (lower value) would become preferable. In such cases, a mild ROI constraint---which caps the price an advertiser is willing to pay---will push a bidder towards simple value maximizing behavior.
%
\subsection{Theoretical Robustness}
To formalize this, we define a broad class of preference relations that captures everything from quasilinear to value-maximizing behaviors:
\begin{definition}
A preference relation $\prec_i$ is {\em super-quasilinear} if an agent who prefers a higher-value option under quasi-linear preferences also always prefers this option under $\prec_i$ when both are preferred to non-participation, i.e.,
\begin{gather*}
v_ix_i(o)\geq v_ix_i(o') \mbox{ and } (v_i-p_i)x_i(o)\geq (v_i-p_i')x_i(o')\geq0 \quad\Rightarrow \quad(o,p_i)\succeq_i(o',p_i')\enspace.
\end{gather*}
and prefers non-participation to any option with $p_i>v_i$.
\end{definition}
This class includes traditional quasilinear bidders and value maximizers, as well as families of preferences that interpolate between them, such as profit-maximizing preferences with an ROI constraint $\gamma$ (when $v_i$ represents the maximum price that satisfies $i$'s ROI constraint).

For super-quasilinear bidders we derive a condition under which an ROI constraint makes all preferences look value-maximizing. Our condition simply says that the $x_i$'s generated by the mechanism are sufficiently different that the ROI constraint precludes any scenarios in which value-maximizing and quasilinear preferences would result in a different best response for $i$. 

\begin{theorem}\label{thm:robust}
When bidders have super-quasi-linear preferences $\prec_i$, an ROI constraint $\gamma$, and
\[v_ix_i(o)\frac{\gamma}{\gamma+1}\geq v_ix_i(o')\mbox{ for all }v_i,o,o'\mbox{ where }x_i(o)>x_i(o')\enspace,\]
no bidder wishes to lie in an auction that is truthful, is individually rational for value maximizers, and has no positive transfers (no positive transfers means that the auctioneer never pays the bidders; individual rationality says that every bidder is at least as happy as if she had not participated at all.)
\end{theorem}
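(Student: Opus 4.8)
The plan is to show that, under these hypotheses, a bidder with super-quasi-linear preferences $\prec_i$ and an ROI constraint $\gamma$ has no profitable deviation by reducing her decision problem to that of a value maximizer with the same ROI constraint, and then invoking the fact that the auction is truthful for value maximizers. First I would fix the other bids $b_{-i}$ and consider the menu of value-price pairs $(v_i(o),p_i)$ that bidder $i$ can induce by varying her report. Both bidder types treat the ROI constraint as a hard feasibility requirement, so both restrict attention to outcomes with $p_i\le \frac{v_i(o)}{1+\gamma}$; call these \emph{ROI-feasible}. Let $(v^*,p^*)$ be the outcome obtained by truthful reporting. Because the mechanism is individually rational for value maximizers and has no positive transfers, $(v^*,p^*)$ is itself ROI-feasible and $p^*\ge 0$; because it is truthful for value maximizers, $(v^*,p^*)$ attains the largest value among all ROI-feasible outcomes in the menu.

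The crux is to show that the super-quasi-linear bidder also weakly prefers $(v^*,p^*)$ to every other ROI-feasible outcome $(v',p')$ in the menu, which is exactly the statement that she has no incentive to lie. Consider the case $v^*>v'$. ROI-feasibility of the truthful outcome gives the quasi-linear lower bound
\[v^*-p^*\ \ge\ v^*-\frac{v^*}{1+\gamma}\ =\ v^*\,\frac{\gamma}{\gamma+1}\enspace,\]
the gap hypothesis of the theorem gives $v^*\frac{\gamma}{\gamma+1}\ge v'$, and no positive transfers ($p'\ge0$) gives $v'\ge v'-p'$. Chaining these three inequalities yields $v^*-p^*\ge v'-p'$. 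Since moreover $v^*>v'$, the definition of super-quasi-linear preferences applies directly and gives $(v^*,p^*)\succeq_i(v',p')$.

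This establishes that over the ROI-feasible menu the super-quasi-linear bidder's ranking agrees with value maximization (the higher-value feasible outcome is always preferred), and ROI-infeasible outcomes are rejected outright by both bidder types, so no deviation can help. The only remaining case is a tie $v^*=v'$ realized by two distinct reports; as in the paper's other characterizations this happens only on a measure-zero set of types, so the conclusion holds in the DSIC-AE sense. I expect the main subtlety to be not any hard estimate but the bookkeeping that pins down why $(v^*,p^*)$ is ROI-feasible and why $p'\ge 0$: it is precisely here that the three auxiliary hypotheses each play a distinct role, with individual rationality forcing the truthful outcome to satisfy ROI, no positive transfers supplying $p'\ge0$, and truthfulness for value maximizers guaranteeing $(v^*,p^*)$ is the highest-value feasible option, so that every link of the inequality chain is available.
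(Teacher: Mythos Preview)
Your argument for the dominant case $v^*>v'$ is exactly the paper's: the same three-link chain $v^*-p^*\ge v^*\frac{\gamma}{\gamma+1}\ge v'\ge v'-p'$, with individual rationality, the gap hypothesis, and no positive transfers each supplying one link, followed by an appeal to super-quasi-linearity. Your treatment of $v'>v^*$ is a clean variant of the paper's: you argue that any such deviation is ROI-infeasible and hence rejected by the bidder's own hard constraint, whereas the paper routes through the comparison $(0,0)\succ_i(v',p')$; both work.

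The one gap is the equal-value case. Dismissing $v^*=v'$ as a measure-zero tie is not justified: this is not the threshold-type phenomenon from the single-parameter characterizations, but simply the event that some misreport yields an outcome $o'$ with $v_i(o')=v_i(o)$, and nothing in the hypotheses prevents this from holding for a positive-measure set of types (note that the gap condition $v_i(o)\frac{\gamma}{\gamma+1}\ge v_i(o')$ is vacuous when $v_i(o)=v_i(o')$, so it places no restriction here). The paper handles this case directly rather than waving it off: because the mechanism is truthful for value maximizers, and a value maximizer at equal value strictly prefers the lower price, the truthful report must already achieve the minimum price among all reports yielding value $v^*$, i.e.\ $p^*\le p'$; then at equal value and weakly lower price the super-quasi-linear bidder also weakly prefers $(v^*,p^*)$. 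This is a one-line addition, but without it you obtain only DSIC-AE under an unstated genericity assumption, whereas the paper's argument delivers the conclusion for every type.
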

\begin{proof}{Proof of Theorem~\ref{thm:robust}.}
	Let $o$ and $p_i$ be the outcome and price that the bidder sees if it reports truthfully. Let $o'$ and $p_i'$ be any outcome and price that the bidder can achieve by lying.

First, suppose that $x_i(o')<x_i(o)$, i.e., bidder $i$ is lying to achieve an outcome with a lower value. Then we know from individual rationality that
$p_i\leq\frac{v_i}{\gamma+1}$,
and we know from no-positive-transfers that $p_i'\geq0$. Thus, by the conditions of the theorem, we know that
\[(v_i-p_i)x_i(o)\geq v_i\frac{\gamma}{\gamma+1}x_i(o)>v_ix_i(o')\geq (v_i-p_i')x_i(o')\enspace.\]
In words, a profit-maximizing bidder prefers truthful reporting at the maximum individually-rational price to lying, even if $o'$ happened for free. Since bidder $i$ has super-profit-maximizing preferences and $v_i(o)>v_i(o')$, it follows $(o,p_i)\succ_i(o',p_i')$.

Next, suppose that $x_i(o')=x_i(o)$. We know that the mechanism is truthful for value maximizers, so it must be that $p_i\leq p_i'$, otherwise a value maximizer would lie to achieve $o'$, $p_i'$. Since any super-quasilinear bider will prefer a lower price for the same allocation, we can conclude that $(o,p_i)\succeq_i(o', p_i')$.

Finally, suppose that $x_i(o')>x_i(o)$. By individual rationality we know $(o,p_i)\succeq_i(0,0)$. Since the auction is truthful for value maximizers, we know that $p_i'>v_i$, otherwise a value maximizer would lie to achieve $(o',p_i')$. By definition of a super-quasilinear bidder, $p_i'>v_i$ implies $(0,0)\succ_i(o',p_i')$ and so we get $(o,p_i)\succeq_i(0,0)\succ_i(o',p_i')$, implying bidder $i$ will prefer to tell the truth.
\qed
\end{proof}

\subsection{Empirical Robustness}
To empirically evaluate Theorem~\ref{thm:robust}, we look at Yahoo marketplace data and ask {\em for what ROI constraint can we safely conclude that any super-quasilinear bidder would behave like a value maximizer?}

We employ the standard
separable click-through-rate (CTR) framework used to study sponsored search
auctions. In this model, $n$ ads compete for $m$ slots. Each ad $i$ has a private value per click $v_i$, and there exist public parameters
$(\alpha_1,\ldots,\alpha_m)$ and $(\beta_1,\ldots,\beta_n)$ such that, when ad
$i$ is shown in slot $j$, the user clicks on it with probability:
\[\Pr[\mbox{click on ad $i$ when shown in slot $j$}]=\alpha_j\beta_i, \]
Note that for any ad $i$ and slot $j$, we have $\frac{\Pr[\mbox{click on ad $i$ when shown in slot $j$}]}{\Pr[\mbox{click on ad $i$ when shown in slot $j+1$}]}=\frac{\alpha_j}{\alpha_{j+1}}$.

If we directly apply our theorem to this separable model we get the following conservative corollary stating that when the probability of a click drops substantially from slot $j$ to slot $j+1$, then a mild ROI constraint makes agents look like value maximizers:
\begin{corollary}[of Theorem~\ref{thm:robust}]
If $\frac{\alpha_j}{\alpha_{j+1}}\geq\frac{\gamma}{\gamma+1}$ for all $i$ in a standard sponsored search auction, then no bidder with super-quasi-linear preferences and a ROI constraint of $\gamma$ wishes to lie under GSP pricing.
\end{corollary}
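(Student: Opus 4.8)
The plan is to derive this corollary as a direct specialization of Theorem~\ref{thm:robust} to the separable sponsored-search model, so the work amounts to two things: (i) checking that GSP pricing satisfies the three structural hypotheses of that theorem, and (ii) translating its abstract value-separation hypothesis into the stated condition on slot effects. For (i), GSP is both truthful and individually rational for value maximizers, since it is precisely the single-parameter mechanism covered by Theorem~\ref{thm:vm-sp-char}; and it has no positive transfers because the GSP price is the minimum bid needed to retain a slot, which is nonnegative and charged only per click. Thus all three preconditions of Theorem~\ref{thm:robust} are met, and the only remaining task is to verify the value-separation hypothesis.

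For (ii) I would exploit the multiplicative structure of the separable model. A bidder's value for being placed in slot $j$ is $v_i=\alpha_j\beta_it_i$, so for any two slots $j$ and $k$ the ratio of values is $\alpha_j/\alpha_k$; the bidder-specific factor $\beta_it_i$ cancels. Consequently the hypothesis of Theorem~\ref{thm:robust}, namely $v_i(o)\frac{\gamma}{\gamma+1}\geq v_i(o')$ whenever $v_i(o)>v_i(o')$, collapses to a single bidder-independent requirement on slot effects: writing $j$ for the higher-valued slot and $k$ for the lower-valued one (so $\alpha_j>\alpha_k$), the condition is exactly $\frac{\gamma}{\gamma+1}\alpha_j\geq\alpha_k$. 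Pairs in which one outcome is receiving no slot are vacuous, since that outcome has value $0$.

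The final step is to show that it suffices to impose this condition on adjacent slots only, which is the content of the corollary's hypothesis (read as the statement that consecutive slot effects decay by at least the factor $\frac{\gamma}{\gamma+1}$, i.e.\ $\frac{\gamma}{\gamma+1}\alpha_i\geq\alpha_{i+1}$). Because $\frac{\gamma}{\gamma+1}\leq 1$ and the slot effects are decreasing, a telescoping argument gives, for any $j<k$,
\[
\alpha_k \leq \frac{\gamma}{\gamma+1}\alpha_{k-1}\leq \left(\frac{\gamma}{\gamma+1}\right)^{k-j}\alpha_j \leq \frac{\gamma}{\gamma+1}\alpha_j\enspace,
\]
so the separation condition holds for every pair of slots, not just adjacent ones. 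With every hypothesis of Theorem~\ref{thm:robust} verified, its conclusion---that no super-quasi-linear bidder with ROI constraint $\gamma$ wishes to lie---applies verbatim under GSP pricing.

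I expect the only genuinely delicate point to be the translation in step (ii): the fact that $\beta_it_i$ cancels from the value ratio is what makes a single slot-based inequality uniform across \emph{all} bidders and all value profiles, which is exactly what the universal quantifier ``for all $v_i$'' in Theorem~\ref{thm:robust} demands. Everything else---confirming GSP's structural properties and the short telescoping estimate---is routine bookkeeping, so no substantial new obstacle arises beyond recognizing this cancellation.
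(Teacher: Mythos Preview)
Your approach is exactly what the paper does: it states the corollary as a direct specialization of Theorem~\ref{thm:robust} without a separate proof, and your (i)--(ii) breakdown simply fills in the routine verification that the paper leaves implicit. You also correctly read the hypothesis as $\frac{\gamma}{\gamma+1}\alpha_i\geq\alpha_{i+1}$ (the decay condition), which is what is actually needed to match the value-separation hypothesis of the theorem; the inequality as printed in the corollary is trivially satisfied since $\alpha_i>\alpha_{i+1}$ already, so the intended reading must be the one you give.
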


This claim gives conditions under which no bidder has an incentive to lie,
independent of other bidders' bids; however, using the bids of other agents may
give sharper conditions. Given bids, we can ask a simpler question of whether
any bidder could possibly prefer to lie under current marketplace conditions:

\begin{lemma}\label{lem:native-robust}
When no two bidders have the same score $\beta_t$, and
\[\forall i,\; \frac{\alpha_i}{\alpha_i-\alpha_{i+1}}-\frac{\alpha_{i+1}}{\alpha_i-\alpha_{i+1}}\frac{\beta_{i+2}b_{i+2}}{\beta_{i+1}b_{i+1}}<\gamma\enspace,\]
no bidder with super-quasi-linear preferences and a ROI constraint of $\gamma$ wishes to lie under GSP pricing (i.e. under truthful pricing for value maximizers) holding others' bids fixed.
\end{lemma}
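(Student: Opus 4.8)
The plan is to specialize the three-case argument of Theorem~\ref{thm:robust} to the separable slot model, but to sharpen its ``even if the deviation were free'' step into one that charges the deviating bidder the \emph{exact} GSP price of the slot it moves to; using the realized GSP prices rather than a worst-case free outcome is precisely what turns the bid-independent corollary into the bid-dependent statement here. First I would fix a focal bidder ranked in slot $i$ under truthful reporting, abbreviate the scores by $s_j := \beta_j b_j$ (so $s_1 > s_2 > \cdots$ by the no-ties hypothesis), and recall that under GSP the occupant of slot $k$ pays the per-click price $s_{k+1}/\beta$ of whoever sits directly beneath it. I would then tabulate, for each slot $k$ the focal bidder could reach by rebidding, its value $\alpha_k \beta_i t_i$ and its total expected payment: a weak \emph{down-move} to slot $k \ge i$ keeps the original occupant of slot $k+1$ just below, costing $\alpha_k s_{k+1}$ (so truthful slot $i$ costs $\alpha_i s_{i+1}$), whereas an \emph{up-move} to slot $k < i$ pushes the original occupant of slot $k$ down by one and so costs $\alpha_k s_k$. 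Every candidate lie is one of these two kinds.

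Up-moves I would dispatch exactly as in the third case of Theorem~\ref{thm:robust}, via the ROI constraint alone. Truthful value-maximizer-with-ROI bidding makes the focal score equal its per-click willingness to pay, $s_i = \beta_i t_i/(1+\gamma)$, so occupying slot $i$ forces $\beta_i t_i \ge (1+\gamma)s_{i+1}$. Any up-move to a slot $k < i$ costs $\alpha_k s_k$ against value $\alpha_k \beta_i t_i$, and $s_k > s_i = \beta_i t_i/(1+\gamma)$ shows its price strictly exceeds $\tfrac{1}{1+\gamma}$ of its value; the outcome violates the ROI constraint and is therefore inadmissible, so by individual rationality the bidder prefers its truthful outcome. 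This disposes of all up-moves at once, with no condition on the $\alpha$'s needed.

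The down-moves are the crux, and this is where super-quasi-linearity does the real work: a down-move strictly lowers the value, so super-quasi-linearity makes the truthful outcome preferred the moment it carries at least as much quasi-linear utility. For the one-slot drop to slot $i+1$ this reads $\alpha_i(\beta_i t_i - s_{i+1}) \ge \alpha_{i+1}(\beta_i t_i - s_{i+2})$, i.e. $\beta_i t_i \ge (\alpha_i s_{i+1} - \alpha_{i+1}s_{i+2})/(\alpha_i - \alpha_{i+1})$. The focal value is not observed, but occupying slot $i$ under the ROI constraint forces $\beta_i t_i \ge (1+\gamma)s_{i+1}$, so it suffices that the right-hand side be suitably small; dividing by $s_{i+1}$ puts this in the displayed form $\tfrac{\alpha_i}{\alpha_i-\alpha_{i+1}} - \tfrac{\alpha_{i+1}}{\alpha_i-\alpha_{i+1}}\tfrac{s_{i+2}}{s_{i+1}} < \gamma$. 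The one place to be careful is the exact constant, since it is governed by how the ROI constraint converts a value into a maximum admissible bid, and the gap between $\gamma$ and $1+\gamma$ lives here.

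The step I expect to be the main obstacle — and the reason the condition is stated with $\gamma$ rather than the weaker $1+\gamma$ that a single adjacent drop would require — is upgrading ``no adjacent down-move helps'' to ``no down-move helps,'' since the condition is stated slotwise. I would telescope along the focal bidder's quasi-linear payoff $U(k) = \alpha_k(\beta_i t_i - s_{k+1})$: for each $m \ge i$ the index-$m$ condition rearranges to $(\alpha_m s_{m+1} - \alpha_{m+1}s_{m+2})/(\alpha_m - \alpha_{m+1}) < \gamma\, s_{m+1}$, while decreasing scores give $\beta_i t_i \ge (1+\gamma)s_{i+1} \ge (1+\gamma)s_{m+1} > \gamma\, s_{m+1}$, and these combine to $U(m) > U(m+1)$. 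Chaining yields $U(i) > U(i+1) > \cdots$, so slot $i$ beats every lower slot, and together with the up-move argument truthful reporting is a global best response. The strict $<\gamma$ form is exactly what supplies the slack needed uniformly at every $m$, and because the scores are distinct the whole argument never touches a tie, so the preference is strict and no measure-zero caveat intrudes.
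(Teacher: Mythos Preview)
Your argument is correct and follows precisely the route the paper indicates: the paper omits the proof but states it is ``similar to Theorem~\ref{thm:robust},'' and your three-case decomposition (up-moves killed by the ROI constraint, equal-slot trivial, down-moves handled by super-quasi-linearity plus the displayed inequality) is exactly that specialization, with the realized GSP prices replacing the worst-case bound of Theorem~\ref{thm:robust}. One small note: your telescoping step in fact goes through with the weaker bound $1+\gamma$ as well, since $\beta_i t_i = (1+\gamma)s_i > (1+\gamma)s_{m+1}$ already for every $m\ge i$; so the stated hypothesis with $\gamma$ is simply a slightly conservative sufficient condition rather than something forced by the multi-step chaining, but this of course only makes the lemma easier to prove.
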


The proof (omitted) is similar to Theorem~\ref{thm:robust}.

We tested this lemma empirically by looking at bid data for the slot
auctions on Yahoo's homepage stream. We took a dataset consisting of over one
hundred thousand auctions from a brief period of time within a single day. 

Our results are striking --- if bidders require an ROI of 1, then 80\% of auctions would be such that no bidder can benefit by lying under GSP pricing. This strongly suggests that GSP may in fact be the appropriate auction for this setting. See Figure~\ref{fig:emp}.

\begin{figure}[h!]
\centering  
\hspace{-0mm}
\epsfxsize=5.5in \epsfbox{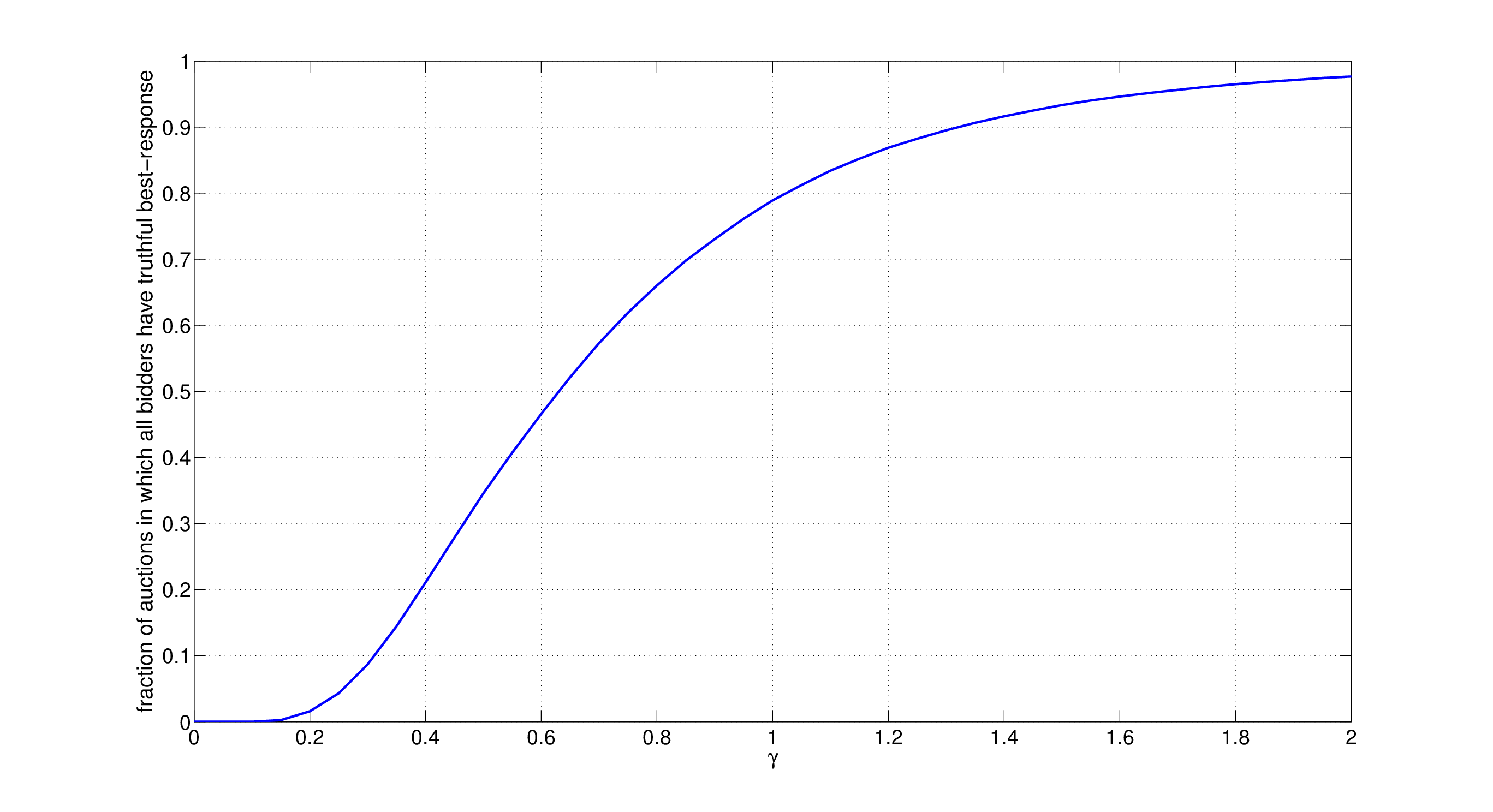}
\caption{\label{fig:emp} Illustration of the proportion of auctions in which truthtelling is a best-response for every advertiser, given the bids of all others, assuming all bidders have a ROI constraint of $\gamma$. At $\gamma=1$, 80\% of auctions are such that nobody should lie; as $\gamma$ approaches 2, virtually all auctions satisfy Lemma~\ref{lem:native-robust}. This is derived from a dataset of auctions from Yahoo's homepage stream, in which slot advertisements are interspersed in a stream of rich content links.}
\vspace{-0mm}
\end{figure}

\section{The Greedy Mechanism for General Domains}  \label{sec:value-max}
\renewcommand{\O}{\mathcal{O}}
We now move beyond single-parameter preferences and introduce a family of
truthful mechanisms based on a greedy allocation algorithm. Analogous to the
Vickrey-Clarke-Groves auction, these greedy mechanisms are powerful because
they are truthful for value maximizers even under the most general valuation
models. 

\paragraph{The Greedy Auction.}
We first present a simple ``unweighted'' greedy auction for value maximizers with nonnegative values, defined in Algorithm~\ref{alg:greedy-mech}, that informally proceeds  as follows: 
\begin{enumerate}
\item Compute the outcome $o^*$ by the following algorithm:
\begin{enumerate}
\item Find the outcomes that maximize the value of the highest-value bidder.
\item Among the outcomes from (a), find the outcomes that maximize the value of the second-highest-value bidder.
\item Repeat until all bidders have been considered and call the outcome $o^*$; if more than one outcome remains, pick one arbitrarily.
\end{enumerate}
\item Payments are computed as the following ``externality'': among the (other) bidders whose values change when $i$ is present in the auction, identify the bidder who gets the most value if $i$ weren't present; charge this value to $i$ (in domains with well-ordered discrete typespaces, we charge a price equal to the next type above this critical value). 
\end{enumerate}

\begin{example}\label{ex:greedy}
Suppose there are three outcomes $\{o_1,o_2,o_3\}$ and four bidders with values $v_1=\{3,3,1\}$, $v_2=\{0.5,1,1\}$, $v_3=\{2,1,0\}$, and $v_4=\{0.5,0.5,0.5\}$.

The greedy maximizing auction for value maximizers chooses outcome $o_1$ as follows:
\begin{enumerate}
\item The value of the highest-value bidder is maximized by taking either outcome $o_1$ or $o_2$ since $v_1(o_1)=v_1(o_2)=3$.
\item Among outcomes $o_1$ and $o_2$, the value of the second-highest-value bidder is maximized by taking $o_1$ since $v_3(o_1)=2$.
\end{enumerate}
To compute prices, observe that the auction would still choose $o_1$ if bidder 1, 2, or 4 were removed from the auction, so these bidders pay $p_{\{1,2,4\}}=0$. For bidder 3, notice that the outcome would be $o_2$ if it were removed: neither bidder 1 nor bidder 4 care whether the outcome is $o_1$ or $o_2$, but bidder 2 gets $v_2(o_2)=1$ from $o_2$ instead of $v_2(o_1)=0.5$ from $o_1$. The price bidder 3 pays is bidder 2's value for $o_2$, i.e. $p_3=v_2(o_2)=1$.
\end{example}

\begin{algorithm}
	\SetAlgoNoLine
	\SetKwFunction{greedy}{Greedy}\SetKwFunction{greedymech}{GreedyMech}
	\SetKwProg{opt}{Algorithm}{}{}
	\SetKwProg{mech}{Mechanism}{}{}
	\opt{\greedy{$\{b_i\}$}}{
	\KwIn{Bids , one vector $b_i\in\Re^m$ per bidder.}
	\KwOut{An outcome $o^*$.}
	
	$C\leftarrow\O$\tcp*{Candidate outcomes}
	\For{$k=1$ to $n$}{
		$C\leftarrow\argmax_{o\in C}(\mbox{$k$-th highest bid for outcome $o$)}$\;
	}
	\Return{$o^*\in C$}\tcp*{Break ties according to any order on $\O$}
	}
	
	\mech{\greedymech{$\{b_I\}$}}{
		\KwIn{Bids $\{b_i\}$, one vector $b_i\in\Re^m$ per bidder.}
	\KwOut{An outcome $o^*$ and payments $\{p_i\}$.}
	
	$o^*\leftarrow\greedy(b)$\;
	\For{$i=1$ to $n$}{
		$o_{-i}^*\leftarrow\mbox{\greedy{$b_{-i},-\infty$}}$\;
		$p_i\leftarrow\max \left\{b_j(o_{-i}^*)\left|b_j(o_{-i}^*)>b_j(o^*)\right.\right\}$\;
	}
	\Return{$o^*$, $\{p_i\}$}\;
	
	}
	
	\caption{The Greedy Mechanism for Value Maximizers.}
\label{alg:greedy-mech}
\end{algorithm}

Our first theorem for this setting is that the greedy mechanism is truthful, as long as ties only happen in a zero-measure subset of values. A tie occurs only when two different (non-trivial) outcomes have exactly the same value for two different bidders. This condition is typically true if one chooses the Lebesgue measure or any probability density.

\begin{theorem}\label{thm:greedy-mech}
When values are nonnegative, the greedy auction for (simple) value maximizers is DSIC-AE as long as ties under truthful bidding happen with measure zero.
\end{theorem}
\begin{proof}{Proof of Theorem~\ref{thm:greedy-mech}.} Observe that the greedy mechanism proceeds in rounds. In each round, the bidder who can attain the highest value becomes the ``winner'' (unless there is a tie) and we discard outcomes that do not achieve this optimal value. Let $\mathcal O_r$ denote the eligible outcomes at round $r$, $i^*_r$ denote the bidder who wins round $r$ and $v_r$ denote the value it receives.

Say that a value vector $v$ results in a tie if there are distinct bidders $i$ and $j$ and outcomes $o_i$ and $o_j$ such that $v_i(o_i)=v_j(o_j)>0$. For now, assume that ties do not happen. Note the following:

\begin{itemize}
\item[(a)] In the round where bidder $i$ is the winner, she gets the outcome with the highest value to her---which is also the outcome with the highest value---among all remaining outcomes.
\item[(b)] In order to win in an earlier round $r$, bidder $i$ must express a value at least $c=v_r>\max_{o\in\mathcal O_r}v_i(o)$ and will pay $p_i=c$. (The inequality is strict because we assumed no ties.)
\end{itemize}

(a) implies that bidder $i$ is getting her favorite outcome among the outcomes remaining when she wins, so she cannot gain by changing the outcome in that round or a later round. (b) implies that bidder $i$ cannot gain by winning an earlier round, because she would need to pay a price greater than her maximum value for any remaining outcome.

Our previous argument assumed no ties occur. When the bid space is continuous, ties happen with measure zero and the mechanism is DSIC-AE.
\qed
\end{proof}

\paragraph{Transformed Greedy Mechanisms.} Next, we show how the greedy mechanism can be generalized. Algorithm~\ref{alg:greedy-tmech} defines the mechanism, which allows negative values, discretized bid spaces, arbitrary monotone transformations of bidders' values, and outcome-specific ``offsets.'' Informally, the {\em transformed greedy auction for value maximizers} transforms agent $i$'s value according to a monotone function $\theta_i:\Re\rightarrow\Re$ and greedily chooses an outcome according to $\theta$. Again, any mechanism in this family is truthful for any transformation of value:

\begin{algorithm}[h]
	\SetAlgoNoLine
	\SetKwFunction{tfgreedy}{TransformedGreedy}\SetKwFunction{ftgreedymech}{TransformedGreedyMech}
	\SetKwProg{opt}{Algorithm}{}{}
	\SetKwProg{mech}{Mechanism}{}{}
	\opt{\tfgreedy{$\{b_i\}$}}{
	\KwIn{Bids, one vector $b_i\in\Re^m$ per bidder.}
	\KwOut{An outcome $o^*$.}
	
	\For{$o\in\mathcal O$}{
		$z(o) \leftarrow\;$ \texttt{Sort} $\{ \theta_1(v_1(o)),\dots,\theta_n(v_n(o)),\theta_0(o) \}$ by $|\theta_i(v_i(o))|$ in descending order.
	}
	$o^*\leftarrow \;$ $o$ with greatest $z(o)$ according to a lexicographic order (breaking ties by order on $\O$)

\Return{$o^*$}\;
	}
	
	\mech{\ftgreedymech{$\{b_I\}$}}{
		\KwIn{Bids $\{b_i\}$, one vector $b_i\in\Re^m$ per bidder.}
	\KwOut{An outcome $o^*$ and payments $\{p_i\}$.}
	
	$o^*\leftarrow\tfgreedy{b}$\;
	\For{$i=1$ to $n$}{
		$p_i \leftarrow \inf \Big\{ b_i'(o^+) \,\big|\, o^+ = b_{-i},b_i' \text{ and } b_j(o^+) = b_j(o^*) \Big\}$
	}
	\Return{$o^*$, $\{p_i\}$}\;
	
	}
	
	\caption{The Transformed Greedy Family of Mechanisms.}
\label{alg:greedy-tmech}
\end{algorithm}

\begin{theorem}\label{thm:twm}
The transformed greedy auction is truthful for any nondecreasing transformations $\theta_i$ and offsets $\theta_0(o)$ as long as ties under truthful bidding happen with measure zero or the bid space is discretized with a minimum increment.
\end{theorem}


\begin{proof} {Proof of Theorem~\ref{thm:twm}.}
We prove the theorem when values are nonnegative. The reasoning for negative values is analogous.

Say that a value vector $v$ results in a tie if there are distinct $i,j\in\{0,1,\dots,n\}$ and outcomes $o_i$ and $o_j$ such that $\theta_i(o_i)=\theta_j(o_j)>0$. For now, assume that ties do not happen.

Note the following:
\begin{itemize}
\item[(a)] The mechanism is IR by construction, so a bidder will never wish to lie to achieve a smaller value.
\item[(b)] By definition the mechanism charges the same price to all types that get the same value, so there is no incentive to misreport to get the same outcome at a lower price.
\item[(c)] Let $o^*$ be the outcome chosen by the mechanism and $z^*=z(o)$. Let $\mathcal O_r$ denote the set of outcomes that have the same value as $z^*$ in positions $1,\dots r-1$, and suppose that bidder $i$ ranks $r$-th in the vector $z^*$. Then for all $k<r$, $z_k^*>\max_{o\in\mathcal O_k}\theta_i(v_i(o))$ (strictly since we assumed no ties).
\end{itemize}
(a) and (b) imply that if there is a reason to lie about one's type, it is to achieve a strictly higher value. In order to do this, a bidder $i$ must move up in the ranking; however, this requires placing a bid $b_i(o)$ for some outcome such that $\theta_i(b_i(o))\geq z_{r-1}^*$ and will cause bidder $i$ to pay $p_i\geq \theta_i^{-1}(z_{r-1}^*)$. By (c) this implies $p_i>\max_ov_i(o)$, i.e. bidder $i$ would necessarily pay more than its value.

As in~\ref{thm:greedy-mech}, ties happen with measure zero when the bid space is continuous and we find that the mechanism is DSIC-AE.

Moreover, when values are discretized additional analysis reveals that the mechanism is DSIC. Note that our argument above only gets into trouble when the inequality in (c) is in fact equality, i.e. $z_{r-1}^*=\max_{o\in\mathcal O_k}\theta_i(v_i(o))$. In particular, this implies that there is an outcome $o'\neq o^*$ such that $z_{r-1}^*=\theta_i(v_i(o'))$. Suppose that $i$ lies and reports $v_i'$ instead, inducing the outcome $o'$. The payment $p_i'$ will be (replacing $\inf$ with $\min$ because we are in a well-ordered space with a minimum bid increment):
\[p_i'=\min \left\{b_i''(o^+)\left|o^+=\mbox{\tfgreedy{$b_{-i},b_i''$} and }b_j'(o^+)=b_j'(o')\right.\right\}\geq v_i(o')+\delta\]
where $\delta$ represents the minimum bid increment and the inequality follows because $i$ needed to increase its bid to get $o'$. This implies $p_i'\geq v_i(o')+\delta>v_i(o')$ and therefore $i$ does not benefit by lying in this scenario either. Thus, the mechanism is DSIC when values are discrete.
\qed
\end{proof}

\paragraph{Bounding Total Value.} Although the existence of truthful mechanisms is a positive result, it may be unfortunate for a mechanism designer because a greedy optimization cannot generally guarantee an approximation better than $\Theta(n)$ to the optimal total generated value, i.e., sum of agents' values under an outcome.
\begin{theorem}
\label{thm:tgreedy-lower}
No transformed greedy mechanism can guarantee more than than a $\frac1{n}$ fraction of the optimal total generated value in the worst case.
\end{theorem}
\begin{proof}{Proof of Theorem \ref{thm:tgreedy-lower}}
Suppose there are at least two outcomes and $n$ bidders. Pick any non-trivial transformed greedy mechanism (i.e. a mechanism that realizes at least two outcomes for some inputs) with transformations $\theta_i$ and offsets $\theta_0$.

We label bidders and choose $z$ such that $\theta_1(z)>\theta_i(z-\epsilon)$ for all $i>1$. Let bidder 1 have value $v_1=(z, z-\epsilon, 0,\dots)$ and let other bidders have values $v_{i}=(0,z-\epsilon,0,\dots)$. The value-optimal outcome is outcome 2, with a total value of $(z-\epsilon)n$; however, a greedy mechanism must choose outcome 1 because $\theta_1(z)>\theta_i(z-\epsilon)$ for all $i$, leading to a total value of $1$. This implies a worst-case total value approximation of $\frac{z}{n(z-\epsilon)}$.

The non-transformed greedy mechanism will achieve a total generated value of at least $\max_{i,o}v_i(o)\geq\frac1n\max_o\sum_iv_i(o)$ and therefore guarantee at least a $\frac1n$ fraction of the optimal total value.
\qed
\end{proof}

\section{Conclusion}  \label{sec:conclusion}

The standard philosophy in theory work is that no model will be perfect, but careful analysis of a simple model can generate insights. In mechanism design, it is typically taken for granted that a simple quasilinear utility is the natural starting point for modeling any selfish agent. However, motivated by an abundance of anecdotes and experience with bidding behavior in industry, we suggest that an alternative model in which agents maximize value with minimal price sensitivity is also important. While our model behaves poorly in certain extremes, it is a simple model that captures a fundamental type of behavior.

Our work addresses a handful of basic questions that are solved for quasilinear bidders in the classic theory, but many more remain open, for example:
\begin{itemize}
\item {\em Can we prove a Roberts's theorem for value maximizers?}
\item {\em How do bidders' preferences aggregate across auctions?} 
\item {\em What are bidders' preferences over lotteries?} 

\item {\em What happens in non-truthful mechanisms? Does revenue equivalence hold?}
\end{itemize}

We expect that many of these questions will have answers that suggest
interesting and important twists on the standard theory. Follow up work may
also find empirical reasons to refine the value maximizer preference structure
we presented and analyzed here.

\section*{Acknowledgments.}

We would like to thank Prabhakar Krishnamurthy, the participants of the 2015 Ad Auctions Workshop, and anonymous reviewers for their helpful comments and suggestions.

\bibliographystyle{plainnat} 
\bibliography{vm}

\begin{thebibliography}{21}
\providecommand{\natexlab}[1]{#1}
\providecommand{\url}[1]{\texttt{#1}}
\expandafter\ifx\csname urlstyle\endcsname\relax
  \providecommand{\doi}[1]{doi: #1}\else
  \providecommand{\doi}{doi: \begingroup \urlstyle{rm}\Url}\fi

\bibitem[Adachi(2013)]{adachi13}
Tsuyoshi Adachi.
\newblock Equity and the vickrey allocation rule on general preference domains.
\newblock \emph{Social Choice and Welfare}, 42\penalty0 (4):\penalty0 813--830,
  2013.
\newblock ISSN 1432-217X.
\newblock \doi{10.1007/s00355-013-0752-0}.
\newblock URL \url{http://dx.doi.org/10.1007/s00355-013-0752-0}.

\bibitem[Aggarwal et~al.(2009)Aggarwal, Muthukrishnan, P\'{a}l, and
  P\'{a}l]{aggarwal09}
Gagan Aggarwal, S.~Muthukrishnan, D\'{a}vid P\'{a}l, and Martin P\'{a}l.
\newblock General auction mechanism for search advertising.
\newblock In \emph{Proceedings of the 18th International Conference on World
  Wide Web}, WWW '09, pages 241--250, New York, NY, USA, 2009. ACM.
\newblock ISBN 978-1-60558-487-4.
\newblock \doi{10.1145/1526709.1526742}.
\newblock URL \url{http://doi.acm.org/10.1145/1526709.1526742}.

\bibitem[Alaei et~al.(2011)Alaei, Jain, and Malekian]{alaei11}
Saeed Alaei, Kamal Jain, and Azarakhsh Malekian.
\newblock Competitive equilibrium in two sided matching markets with general
  utility functions.
\newblock \emph{SIGecom Exch.}, 10\penalty0 (2):\penalty0 34--36, June 2011.
\newblock ISSN 1551-9031.
\newblock \doi{10.1145/1998549.1998556}.
\newblock URL \url{http://doi.acm.org/10.1145/1998549.1998556}.

\bibitem[Archer and Tardos(2001)]{archer01}
A.~Archer and \'{E}. Tardos.
\newblock Truthful mechanisms for one-parameter agents.
\newblock In \emph{Proceedings of the 42Nd IEEE the on Foundations of Computer
  Science}, FOCS '01, pages 482--, Washington, DC, USA, 2001. IEEE Computer
  Society.
\newblock ISBN 0-7695-1390-5.
\newblock URL \url{http://dl.acm.org/citation.cfm?id=874063.875583}.

\bibitem[Auerbach et~al.(2008)Auerbach, Galenson, and Sundararajan]{auerbach08}
Jason Auerbach, Joel Galenson, and Mukund Sundararajan.
\newblock An empirical analysis of return on investment maximization in
  sponsored search auctions.
\newblock In \emph{Proceedings of the 2Nd International Workshop on Data Mining
  and Audience Intelligence for Advertising}, ADKDD '08, pages 1--9, New York,
  NY, USA, 2008. ACM.
\newblock ISBN 978-1-60558-277-1.
\newblock \doi{10.1145/1517472.1517473}.
\newblock URL \url{http://doi.acm.org/10.1145/1517472.1517473}.

\bibitem[Bachrach et~al.(2014)Bachrach, Ceppi, Kash, Key, and
  Kurokawa]{bachrach14}
Yoram Bachrach, Sofia Ceppi, Ian~A. Kash, Peter Key, and David Kurokawa.
\newblock Optimising trade-offs among stakeholders in ad auctions.
\newblock In \emph{Proceedings of the 15th ACM Conference on Economics and
  Computation}, pages 75--92, 2014.

\bibitem[Borgs et~al.(2007)Borgs, Chayes, Immorlica, Jain, Etesami, and
  Mahdian]{borgs07}
Christian Borgs, Jennifer Chayes, Nicole Immorlica, Kamal Jain, Omid Etesami,
  and Mohammad Mahdian.
\newblock Dynamics of bid optimization in online advertisement auctions.
\newblock In \emph{Proceedings of the 16th International Conference on World
  Wide Web}, WWW '07, pages 531--540, New York, NY, USA, 2007. ACM.
\newblock ISBN 978-1-59593-654-7.
\newblock \doi{10.1145/1242572.1242644}.
\newblock URL \url{http://doi.acm.org/10.1145/1242572.1242644}.

\bibitem[Cavallo and Wilkens(2014)]{cavallo14}
Ruggiero Cavallo and Christopher~A. Wilkens.
\newblock \emph{Web and Internet Economics: 10th International Conference, WINE
  2014, Beijing, China, December 14-17, 2014. Proceedings}, chapter GSP with
  General Independent Click-through-Rates, pages 400--416.
\newblock Springer International Publishing, Cham, 2014.
\newblock ISBN 978-3-319-13129-0.
\newblock \doi{10.1007/978-3-319-13129-0_32}.
\newblock URL \url{http://dx.doi.org/10.1007/978-3-319-13129-0_32}.

\bibitem[Deng et~al.(2010)Deng, Sun, Yin, and Zhou]{deng10}
Xiaotie Deng, Yang Sun, Ming Yin, and Yunhong Zhou.
\newblock \emph{Mechanism Design for Multi-slot Ads Auction in Sponsored Search
  Markets}, pages 11--22.
\newblock Springer Berlin Heidelberg, Berlin, Heidelberg, 2010.
\newblock ISBN 978-3-642-14553-7.
\newblock \doi{10.1007/978-3-642-14553-7_4}.
\newblock URL \url{http://dx.doi.org/10.1007/978-3-642-14553-7_4}.

\bibitem[D\"{u}tting et~al.(2016)D\"{u}tting, Fischer, and Parkes]{dutting16}
Paul D\"{u}tting, Felix Fischer, and David~C. Parkes.
\newblock Truthful outcomes from non-truthful position auctions.
\newblock In \emph{Proceedings of the 2016 ACM Conference on Economics and
  Computation}, EC '16, pages 813--813, New York, NY, USA, 2016. ACM.
\newblock ISBN 978-1-4503-3936-0.
\newblock \doi{10.1145/2940716.2940731}.
\newblock URL \url{http://doi.acm.org/10.1145/2940716.2940731}.

\bibitem[Edelman et~al.(2007)Edelman, Ostrovsky, and Schwarz]{eos07}
Benjamin Edelman, Michael Ostrovsky, and Michael Schwarz.
\newblock Internet advertising and the generalized second-price auction:
  Selling billions of dollars worth of keywords.
\newblock \emph{American Economic Review}, 97\penalty0 (1):\penalty0 242--259,
  2007.

\bibitem[Fadaei and Bichler(2016)]{fadaei16}
Salman Fadaei and Martin Bichler.
\newblock Truthfulness and approximation with value-maximizing bidders.
\newblock In \emph{Algorithmic Game Theory - Ninth International Symposium,
  {SAGT}}, 2016.
\newblock To appear.

\bibitem[Kitts and Leblanc(2004)]{kitts04}
Brendan Kitts and Benjamin Leblanc.
\newblock Optimal bidding on keyword auctions.
\newblock \emph{Electronic Markets}, 14\penalty0 (3):\penalty0 186--201, 2004.
\newblock \doi{10.1080/1019678042000245119}.
\newblock URL
  \url{http://www.tandfonline.com/doi/abs/10.1080/1019678042000245119}.

\bibitem[Mas-Collel et~al.(1995)Mas-Collel, Winston, and Green]{mas-collel95}
A.~Mas-Collel, M.~D. Winston, and J.~R. Green.
\newblock {Microeconomic Theory}.
\newblock 1995.

\bibitem[Milgrom(2010)]{milgrom10}
Paul Milgrom.
\newblock Simplified mechanisms with an application to sponsored-search
  auctions.
\newblock \emph{Games and Economic Behavior}, 70\penalty0 (1):\penalty0 62--70,
  2010.

\bibitem[Morimoto and Serizawa(2015)]{morimoto15}
Shuhei Morimoto and Shigehiro Serizawa.
\newblock Strategy-proofness and efficiency with non-quasi-linear preferences:
  A characterization of minimum price walrasian rule.
\newblock \emph{Theoretical Economics}, 10\penalty0 (2):\penalty0 445--487,
  2015.
\newblock ISSN 1555-7561.
\newblock \doi{10.3982/TE1470}.
\newblock URL \url{http://dx.doi.org/10.3982/TE1470}.

\bibitem[Myerson(1981)]{myerson81}
Roger~B. Myerson.
\newblock Optimal auction design.
\newblock \emph{Math. Oper. Res.}, 6\penalty0 (1):\penalty0 58--73, February
  1981.
\newblock ISSN 0364-765X.
\newblock \doi{10.1287/moor.6.1.58}.
\newblock URL \url{http://dx.doi.org/10.1287/moor.6.1.58}.

\bibitem[Szymanski and Lee(2006)]{szymanski08}
B.~Szymanski and J.~Lee.
\newblock Impact of roi on bidding and revenue in sponsored search
  advertisement auctions.
\newblock In \emph{Second Workshop on Sponsored Search Auctions}, 2006.

\bibitem[Varian(2007)]{varian07}
Hal~R. Varian.
\newblock Position auctions.
\newblock \emph{International Journal of Industrial Organization}, 25:\penalty0
  1163--1178, 2007.

\bibitem[Varian and Harris(2014)]{varian14}
Hal~R. Varian and Christopher Harris.
\newblock The vcg auction in theory and practice.
\newblock \emph{American Economic Review}, 104\penalty0 (5):\penalty0 442--45,
  2014.
\newblock \doi{10.1257/aer.104.5.442}.
\newblock URL
  \url{http://www.aeaweb.org/articles.php?doi=10.1257/aer.104.5.442}.

\bibitem[Zhou et~al.(2008)Zhou, Chakrabarty, and Lukose]{zhou08}
Yunhong Zhou, Deeparnab Chakrabarty, and Rajan Lukose.
\newblock Budget constrained bidding in keyword auctions and online knapsack
  problems.
\newblock In \emph{Proceedings of the 17th International Conference on World
  Wide Web}, WWW '08, pages 1243--1244, New York, NY, USA, 2008. ACM.
\newblock ISBN 978-1-60558-085-2.
\newblock \doi{10.1145/1367497.1367747}.
\newblock URL \url{http://doi.acm.org/10.1145/1367497.1367747}.

\end{thebibliography}

\end{document}